\newcommand{\ket}[1]{\mbox{$ | #1 \rangle $}}
\newcommand{\bra}[1]{\mbox{$ \langle #1 | $}}
\newcommand{\prob}[1]{\mbox{$ P \left( #1 \right) $}}
\newcommand{\Id}{\mathds{1}}

\documentclass{iopart}

\usepackage{cite}
\expandafter\let\csname equation*\endcsname\relax
\expandafter\let\csname endequation*\endcsname\relax
\usepackage{amsmath}
\usepackage{amssymb}
\usepackage{amsthm} 
\usepackage{amsfonts}
\usepackage{dsfont} 
\usepackage{setspace}
\usepackage{graphicx}

\newtheorem{Lemma}{Lemma}

\begin{document}

\title{A new device-independent dimension witness and its experimental implementation}
\author{Yu Cai}
\address{Centre for Quantum Technologies, National University of Singapore, Singapore}

\author{Jean-Daniel Bancal}
\address{Centre for Quantum Technologies, National University of Singapore, Singapore}

\author{Jacquiline Romero\footnote{Presently at: School of Mathematics and Physics, University of Queensland, St Lucia 4072, Australia}}
\address{School of Physics and Astronomy, SUPA, University of Glasgow, United Kingdom}--

\author{Valerio Scarani}
\address{Centre for Quantum Technologies, National University of Singapore, Singapore}
\address{Department of Physics, National University of Singapore, Singapore}

\begin{abstract}
A dimension witness is a criterion that sets a lower bound on the dimension needed to reproduce the observed data. Three types of dimension witnesses can be found in the literature: device-dependent ones, in which the bound is obtained assuming some knowledge on the state and the measurements; device-independent prepare-and-measure ones, that can be applied to any system including classical ones; and device-independent Bell-based ones, that certify the minimal dimension of some entangled systems. 
Here we consider the Collins-Gisin-Linden-Massar-Popescu (CGLMP) Bell-type inequality for four outcomes. We show that a sufficiently high violation of this inequality witnesses $d\geq 4$ and present a proof-of-principle experimental observation of such a violation. This presents a first experimental violation of the third type of dimension witness beyond qutrits.
\end{abstract}

\maketitle

\section{Introduction}

The \textit{dimension} of a physical system is the number of its perfectly distinguishable states. As such, it is the most basic quantifier of the capacity of that system to encode information. In classical physics, the dimension coincides with the number of possible pure states. In quantum physics, coherent transformations allow for the creation of infinitely many pure states even in the case of finite dimension, a possibility that lies at the heart of quantum information processing.

Arguably, all physical systems have infinite dimension: the electron that carries a spin has also a wave function, the electromagnetic field has potentially infinitely many modes, each being infinite-dimensional. Nevertheless, it is meaningful to ask the question: \textit{what is the minimal dimension one must be able to address, in order to produce some data?} The question can be asked from two different perspectives. One can see it as an upper bound on necessity: ``with suitable control on systems of dimension $d$, you can produce the data you want''. This viewpoint is suited for designing an implementation of a protocol, and is of course the way complexity theorists look at it. Alternatively, one can see it as a lower bound on sufficiency, which is more suited as a comment to the performance of a setup (``having observed these data, I know that my setup is addressing at least $d$ dimensions''). To avoid confusions, this paper is consistently written from the latter perspective.

A \textit{dimension witness (DW)} is a test that provides such a lower bound on the required dimension. The simplest DW is suggested by the definition of dimension itself: one tries to encode, then decode faithfully one \textit{Dit} of information. If the decoding is free from error, the carrier must have dimension $d\geq D$. Since nothing has to be specified about the coding and decoding, this basic test is already device-independent. However, it does not single out quantum systems: the same lower bound is obtained whether the information is encoded in a classical or quantum carrier. More elaborated encode-and-decode (prepare-and-measure) DWs can certify that $d\geq D_c$ in presence of classical manipulations, $d\geq D_q$ in presence of quantum ones, with $D_c > D_q$~\cite{DW1gallego,DW1brunner,DW1bowles}. Such DWs are handy because they can be used to bound the dimension of the systems produced by a single source; they have already been implemented in various ways~\cite{DW1exp1,DW1exp2,DW1exp2b,DAmbrosio2014,Tavakoli}. 

Device-independent DW can be based on the violation of some Bell inequalities~\cite{DW2original, DW2vertesipal, Moroder,Navascues2014Characterization,Navascues2015Bounding} or other non-linear criteria that detect nonlocality~\cite{wehner,SikoraPSD}. Such a violation is impossible with classical no-signalling resources and actually requires entanglement. Thus, these DWs certify how-large-dimensional entanglement is needed in order to reproduce the observed data. The first such entanglement witness used the Collins-Gisin-Linden-Massar-Popescu Bell-type inequality~\cite{CGLMP} for three outcomes (CGLMP$_3$): it showed that, if the violation is high enough, $d\geq 3$ can be certified~\cite{DW2original}; in fact the data of~\cite{vaziri2002} has shown such a violation. Here we prove a similar result for CGLMP$_4$: a moderately high violation can certify $d\geq 3$, and a still larger violation would certify $d\geq 4$.

We also present a proof-of-principle experimental implementation using the orbital angular momentum (OAM) degree of freedom of entangled photon pairs. In this experiment, the statistics of $d$-outcome measurements are evaluated by performing $d$ rank-1 measurements. We report a violation sufficient to certify that at least four-dimensional entanglement is present. The dimension of entangled systems has been discussed in previous experiments using DWs that are not device-independent but rely on rather detailed knowledge of the degrees of freedom involved \cite{Dada2011,Howell2012,Zeilinger1,Zeilinger2}. To our knowledge, ours is the first report of an implementation of a device-independent DW based on the violation of a Bell inequality beyond qutrits.

The plan of the paper is as follows. In Sec.~\ref{sec:theory}, we start by reviewing the Collins-Gisin-Linden-Massar-Popescu family of Bell inequalities, on which our DW is based. Focusing on the case of four outcomes, we approach the maximal violation attainable with two entangled qutrits with a numerical optimisation. This optimisation, though reliable, remains based on heuristics. Therefore, in Sec.~\ref{sec:theory2}, we present a bound based on the negativity: it is slightly more demanding than the previous one, but it guarantees a rigorous conclusion. Finally, the experimental set up and procedure are described in Sec.~\ref{sec:experiment}, followed by the results and their discussion.

\section{The inequality and a numerical bound}
\label{sec:theory}
Let us consider two separated parties Alice and Bob, who can share a quantum state and choose to measure it locally in one of two possible ways, obtaining one out of 4 possible outcomes. We denote the measurement setting of each party as $x,y \in \{0,1\}$, and their outcome $a,b \in \{0,1,2,3\}$, so that their measurement statistics can be summarized in the joint probability distribution $\prob{a,b|x,y}$. A Bell inequality satisfied by local probability distributions in this scenario is the CGLMP$_4$ inequality, that we use in the form of Eqn.~(41) of~\cite{CG}:
\begin{eqnarray}
\label{cglmp}
I_4 = &\prob{a \leq b|0,0} + \prob{a \geq b|0,1} + \nonumber\\
&\prob{a \geq b|1,0} - \prob{a \geq b|1,1} - 2 \leq 0.
\end{eqnarray}
The maximum quantum value $I_4^{Q}=\Big[\sqrt{4\sqrt{2-\sqrt{2}}-3\sqrt{2}+8}+\sqrt{2+\sqrt{2}}-3\Big]/4\simeq 0.365$ can be achieved by measuring a quantum state of dimension four (see \ref{sec:ququart}).

In order to show that $I_4$ is a valid dimension witness for quantum dimension 4, we need to find the maximum value $I_4^{(3)}$ that can attain upon measurement of a qutrit state with arbitrary measurements. To date, there is no known way of calculating this bound exactly. One can approximate the bound from below using nonlinear numerical optimisations, with the danger however of finding only a local maximum and thus of reaching wrong conclusions. Alternatively, techniques developed in the context of device-independent quantum information provide upper bounds, which are certainly valid but may not be tight. Here we describe both approaches, starting with the first.

A full implementation of the numerical optimisation would require parametrising all the possible four-outcome POVMs on two qutrits, and there is no known efficient way of doing that. Thus, we first restricted to projective measurements, whose three outcomes are later post-processed classically into four. With this restriction, we find analytically (see \ref{sec:qutrit}):
\begin{eqnarray}\label{eq:boundPOVM}
I_4^{(3)} \geq I_{P} = \frac{\sqrt{33}-3}{9} \approx 0.30495.
\end{eqnarray}
Extensive numerical search using the see-saw algorithm~\cite{liang2009reexamination,pal2010maximal} did not find any improvement, which suggests that this is the bound of violation for qutrits. However, it would be desirable to have a more rigorously proved bound. For this, we turn to the second approach, which provides device-independent bounds.

\section{Bounds from negativity}
\label{sec:theory2}

Several techniques have been proposed to obtain upper bounds on the value of Bell inequalities with fixed dimension~\cite{Moroder,Navascues2014Characterization}. The method we follow is built out of two observations. The first observation is that the measure of entanglement called \textit{negativity}~\cite{zyczkowski1998,vidal2002} is a DW, since for a two-qu$d$it state, one has ${\cal N}\leq \frac{d-1}{2}$. The second observation is that a lower bound on ${\cal N}$ can be obtained in a device-independent way, since the problem can be cast as a semi-definite programme with no assumptions on the structure of the quantum state and measurements~\cite{Moroder}. Concretely, what one does is to find the minimal possible value of ${\cal N}$ conditioned on the value of $I_4$: if ${\cal N}_{min}(I)>\frac{3-1}{2}=1$, the violation $I$ cannot have been obtained with qutrit states. The result is going to be the following: a two-qutrit state certainly cannot violate the CGLMP inequality by more than
\begin{eqnarray}\label{eq:boundSDP}
I_4^{(3)} \leq I_{N} \approx 0.315.
\end{eqnarray}
A violation of this bound thus certifies that entanglement is present in the measured state in a Hilbert space of dimension 4 or higher.

Let us now prove the claim of Eqn.~\eqref{eq:boundSDP}. Negativity is defined as:
\begin{eqnarray}
\mathcal{N} (\rho) = \frac{\| \rho^{\Gamma_A}\| - 1}{2} = \sum_i \frac{|\mu_i| - \mu_i}{2},
\end{eqnarray}
where $\mu_i$ are the eigenvalues of $\rho^{\Gamma_A}$, the partial transposed $\rho$. In general, the negativity of a $d$-by-$d$ quantum state is bounded by $\frac{d-1}{2}$. This is easy to see for pure bipartite $d$-dimensional state written in their Schmidt form $ \ket{\psi} = \sum_{i=0}^{d-1} \sqrt{\lambda_i} \ket{ii}$
with $\sum_i \lambda_i =1$. For these states, the negativity is $\mathcal{N}(\ket{\psi}\bra{\psi}) = \sum_{i\neq j} \sqrt{\lambda_i\lambda_j} $. By the method of Lagrange multiplier, maximizing this expression subject to $\sum_i \lambda_i =1$, attains extremal value when $\lambda_i = \frac{1}{d}$ for all $i$, thus giving the maximal negativity of $\frac{d-1}{2}$. In the case of mixed states, the argument follows by convexity of the negativity. Thus, a lower bound on the negativity of a state thus also puts a lower bound on its dimensionality.

In order to bound the negativity of a quantum state from its observed statistics, we introduce the matrix of moments $\chi$ at local level $\ell$~\cite{navascues2008convergent,Moroder}. For some quantum state $\rho_{AB}$ and measurements $M_{a|x}^A$ and $M_{b|y}^B$, this matrix is defined as:
\begin{eqnarray}
\chi [\rho] = \sum_{i,j,k,l} \ket{ij}_{\bar{A}\bar{B}}\bra{kl} \chi ^{kl}_{ij},
\end{eqnarray}
where $\chi^{kl}_{ij} = \Tr{\rho_{AB} A_{\bar{k}} ^\dagger A_{\bar{i}} \otimes B_{\bar{l}} ^\dagger B_{\bar{j}}}$, and $A_{\bar{i}} = A_{i_1}A_{i_2} \cdots A_{i_\ell}$ is a product of $\ell$ operators chosen from the set of identity and projectors of measurements, $\{ \Id, M_{a|x}^A \}$, similarly for $B_j$'s. Here, $i,j$ indicates and rows, while $k,l$ indicates the column of $\chi$. By construction, $\chi$ can be viewed as a local processing of the original state $\rho$, hence $\mathcal{N}(\chi[\rho]) \leq \mathcal{N}(\rho)$, lower bounding $\mathcal{N}(\chi[\rho])$ also lower bounds $\mathcal{N}(\rho)$. It is thus sufficient to bound the negativity of the moment matrix $\chi$ in order to bound that of $\rho$. As negativity can be formulated as a trace~\cite{vidal2002}, together with the constraint on the observed violation, bounding negativity can be done by solving the following semidefinite programme~\cite{Moroder}:
\begin{align}\label{sdp}
\mathcal{N}(\chi[\rho])\ \geq\ \min_{\chi,\sigma_+,\sigma_-} 
\quad &
\Tr{\sigma_-}\\
\text{s.t.} 
\quad &
\chi_{\bar{A}\bar{B}} = \sigma_+ - \sigma_- \geq 0, \nonumber\\
&\sigma_{\pm}^{\Gamma_A} \geq 0, \nonumber\\
&I[\chi] = I_4. \nonumber
\end{align}
Here $I[\chi]$ is the expected value of the CGLMP inequality for the correlations issued from the $\chi$ matrix and $\sigma_\pm$ are two moment matrices of the same form as $\chi$.

Solving \eqref{sdp} yields a bound on the dimension of the quantum system responsible for the observed violation $I_4$ which gets better with the level of relaxation $\ell$. However this computation becomes quickly intractable for increasing level $\ell$ due to the large number of variables contained in the $\chi$ matrix. In order to reduce the number of independent variables, we thus make use of a well-known depolarization procedure (described in~\ref{appdep}) that keeps CGLMP violation unchanged. The effect of depolarization amounts to relabelling of inputs and outcomes, which can be taken into account in the moment matrix $\chi$ by applying some permutation, $\mathcal{D}$, on the rows and columns of $\chi$. Regard the indices $i,\,j,\,k,\,l$ as function of local inputs and outcomes, i.e. $i = i(a,x)$, then the matrix after relabelling is:
\begin{eqnarray}
\label{eqn:perm}
\mathcal{D}(\chi)^{kl}_{ij} = \chi ^{f(k)g(l)}_{f(i)g(j)},
\end{eqnarray}
where $f$ and $g$ are bijective maps from the index space to itself. Since relabelling are local, the maps act on Alice's and Bob's indices independently. Moreover the same total map applies to the columns and rows. Hence we can use the following lemma:

\begin{Lemma}
Let $\mathcal{D}$ be some permutations of the form \eqref{eqn:perm}, then the following are true:
\begin{enumerate}
\item $\chi \geq 0 \Rightarrow \mathcal{D}(\chi) \geq 0$, 
\item $\chi^{\Gamma_{\bar{A}}} \geq 0 \Rightarrow \mathcal{D}(\chi)^{\Gamma_{\bar{A}}} \geq 0.$
\end{enumerate}
\end{Lemma}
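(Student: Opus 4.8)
The plan is to exploit the fact that the permutation $\mathcal{D}$ acts on the index space of the moment matrix by a single total map $\pi$ (built from the independent local maps $f$ on Alice's indices and $g$ on Bob's) which is applied \emph{identically} to rows and columns. This means $\mathcal{D}(\chi) = P^\dagger \chi P$ for a genuine permutation matrix $P$ on the combined index space $\{\ket{ij}_{\bar A\bar B}\}$, namely the unitary with $P\ket{kl} = \ket{f^{-1}(k)g^{-1}(l)}$ (indices chosen so that $(\mathcal{D}(\chi))^{kl}_{ij}=\bra{ij}P^\dagger\chi P\ket{kl}=\chi^{f(k)g(l)}_{f(i)g(j)}$). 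Once this conjugation-by-a-unitary form is established, part (i) is immediate: for any vector $v$, $v^\dagger \mathcal{D}(\chi) v = (Pv)^\dagger \chi (Pv) \geq 0$, so positive semidefiniteness is preserved. So the first step I would carry out is simply to check carefully that the relabelling \eqref{eqn:perm} really is of the form $\chi \mapsto P^\dagger\chi P$ with $P$ a permutation matrix, paying attention to the placement of $f$ versus $f^{-1}$ so that $P$ is unitary.

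For part (ii), the key point is that the partial transpose $\Gamma_{\bar A}$ commutes with the particular class of permutations we are using, because the index map factorizes as $\pi = f \otimes g$ across the $\bar A/\bar B$ cut. Concretely, $P = P_A \otimes P_B$ where $P_A$ permutes Alice's indices via $f$ and $P_B$ permutes Bob's via $g$. Partial transposition on the $\bar A$ side acts as $(X\otimes Y)^{\Gamma_{\bar A}} = X^T \otimes Y$ on product operators and extends linearly; since $P_A$ is a permutation matrix, $P_A^T$ is again a permutation matrix (indeed $P_A^T = P_A^{-1}$), and one computes
\begin{equation}
\big(P^\dagger \chi P\big)^{\Gamma_{\bar A}} = \big((P_A\otimes P_B)^\dagger\,\chi\,(P_A\otimes P_B)\big)^{\Gamma_{\bar A}} = (P_A^T\otimes P_B)^\dagger\,\chi^{\Gamma_{\bar A}}\,(P_A^T\otimes P_B).
\end{equation}
Thus $\mathcal{D}(\chi)^{\Gamma_{\bar A}}$ is again a unitary (permutation) conjugate of $\chi^{\Gamma_{\bar A}}$, so $\chi^{\Gamma_{\bar A}}\geq 0$ forces $\mathcal{D}(\chi)^{\Gamma_{\bar A}}\geq 0$ by the same argument as in part (i). The second step, then, is to verify this interchange of $\Gamma_{\bar A}$ with $P$, which reduces to the elementary identity $(X^T)$ for a permutation matrix together with linearity of the partial transpose.

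I expect the main obstacle to be purely bookkeeping rather than conceptual: one must be scrupulous about the index conventions — which of $i,j$ label rows versus columns, whether $\mathcal{D}$ acts by $f$ or $f^{-1}$, and on which tensor factor the transpose lands — so that the claimed identity $\mathcal{D}(\chi) = P^\dagger\chi P$ comes out with $P$ genuinely unitary and not, say, an arbitrary invertible matrix (which would break the argument, since congruence by a non-unitary matrix does not preserve the set of PSD matrices in the way we need, and more importantly the factorization $P=P_A\otimes P_B$ across the cut could fail). Once the conventions are pinned down, both parts follow from the two one-line facts that (a) $v^\dagger P^\dagger M P v = (Pv)^\dagger M (Pv)$ and (b) the partial transpose on $\bar A$ turns $P_A\otimes P_B$ into $P_A^T\otimes P_B$. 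No further estimates or optimization arguments are needed.
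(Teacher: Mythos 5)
Your proof is correct and takes essentially the same approach as the paper: part (i) is the observation that $\mathcal{D}$ is a simultaneous row/column permutation (conjugation by a permutation matrix), and part (ii) rests on $\Gamma_{\bar A}$ commuting with the locally factorized permutation, which the paper verifies directly in index notation. The only (harmless) bookkeeping slip is in your displayed identity: since $P_A$ is real the transposes cancel and one gets exactly $(P^\dagger\chi P)^{\Gamma_{\bar A}}=P^\dagger\chi^{\Gamma_{\bar A}}P$, i.e.\ $\Gamma_{\bar A}$ commutes with $\mathcal{D}$ on the nose, but your version is still a unitary (permutation) conjugation of $\chi^{\Gamma_{\bar A}}$, so the conclusion stands.
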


\begin{proof}
The first line follows from the fact that permuting rows and columns in the same manner does not alter the eigenvalues of a matrix. If the original $\chi$ is positive, so is the permuted one $\mathcal{D}(\chi)$. To show the second part, observe that:
\begin{eqnarray}
 \left(\mathcal{D}(\chi)^{\Gamma_{\bar{A}}} \right)_{ij}^{kl} = \mathcal{D}(\chi)_{kj}^{il} = \chi _{f(k)g(j)}^{f(i)g(l)} = \left(\chi^{\Gamma_{\bar{A}}}\right)_{f(i)g(j)}^{f(k)g(l)} = \mathcal{D}(\chi^{\Gamma_{\bar{A}}})_{ij}^{kl},
\end{eqnarray}
i.e. the partial transposition $^{\Gamma_{\bar A}}$ commutes with the permutation $\mathcal{D}$. Thus by the first line and $\chi^{\Gamma_{\bar A}} \geq 0$, the second line is proven.
\end{proof}

Since the value of the CGLMP inequality is invariant under the considered depolarization, $I[\chi]=I[\mathcal{D}(\chi)]$. Moreover, by the lemma all other quantities in \eqref{sdp} are conserved by the depolarization. For any solution of the programme~\eqref{sdp}, there exist another solution which is invariant under depolarization. One can thus solve~\eqref{sdp} with such matrices from the start, reducing the number of free parameters, and making the optimization tractable.

With this simplification at hand, we could compute a lower bound on the negativity necessary in order to achieve a given violation of the CGLMP$_4$ inequality. The result is plotted in FIG.~\ref{negativity}. We observe that in order to reach violation above 0.315, the minimum negativity needs to exceed $1$, which is the maximum of a qutrit state could achieve. Hence a violation of the CGLMP$_4$ inequality beyond 0.315, indicates the presence of entanglement in more than three level systems. The qubit bound $\gtrsim 0.21$ corresponding to negativity $1/2$ can also be read off the plot, although a tighter bound $1/\sqrt{2} - 1/2 \approx 0.2071$ has been shown in~\cite{Moroder}.

\begin{figure}[h!]
\begin{center}
\includegraphics[width=0.5\textwidth,natwidth=350,natheight=280]{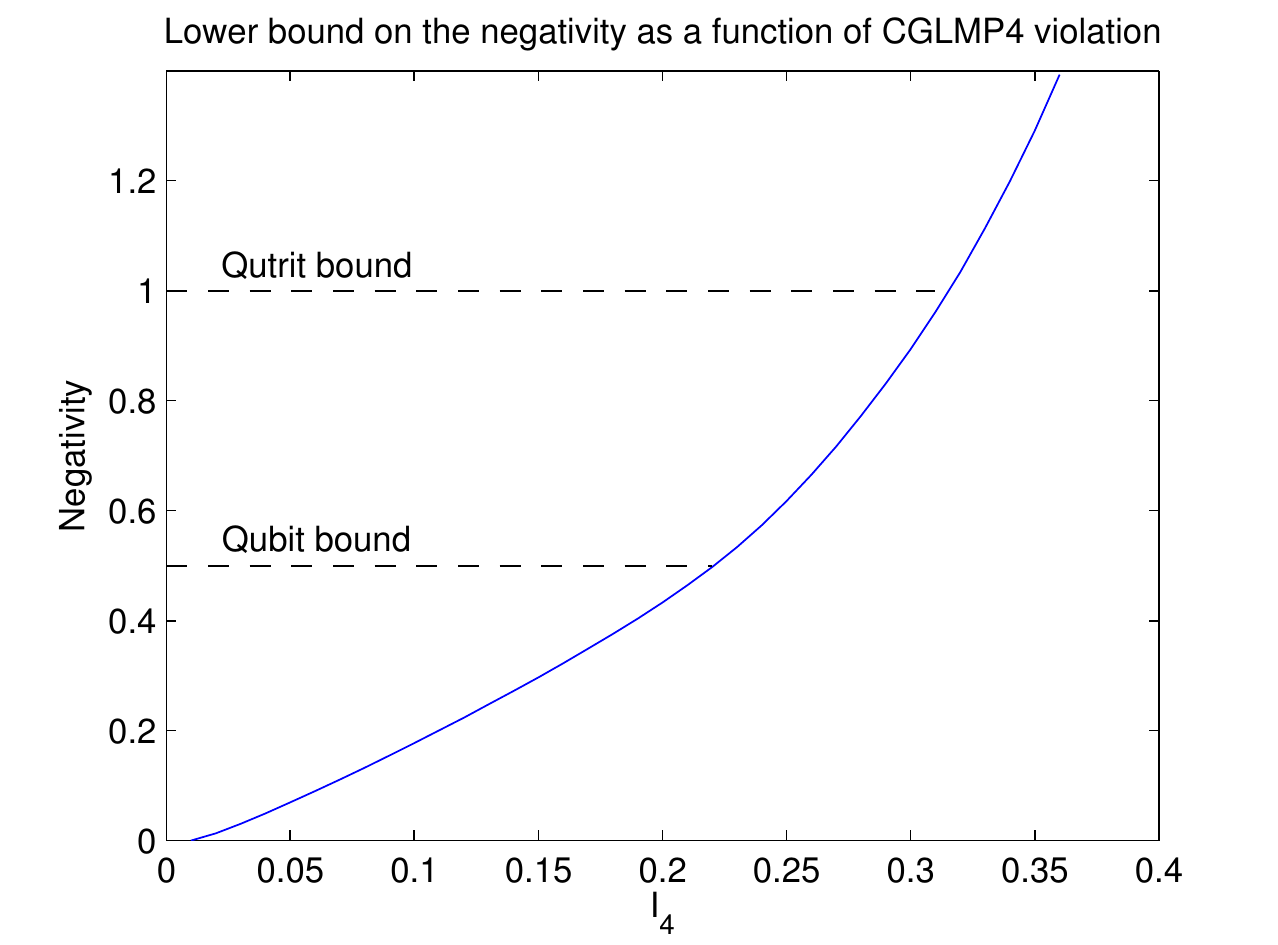}
\caption{Lower bound on the negativity as a function of the CGLMP violation found by solving \eqref{sdp} at local level 2.}
\label{negativity}
\end{center}
\end{figure}

\section{Experiment}
\label{sec:experiment}
Recent experiment by Dada et al. \cite{Dada2011} has demonstrated violation of CGLMP inequalities with photons entangled in orbital angular momentum. The orbital angular momentum (OAM) of a photon in OAM state $|\ell\rangle$, is associated with the helical phase structure, $e^{i\ell\phi}$, where $\ell\hbar$ is the OAM of the photon, and $\phi$ is the azimuthal angle \cite{Allen1992}. Because $\ell$ can take on any integer value, the OAM state space has great potential for high-dimensional entanglement. Note that even though~\cite{Dada2011} measured the $I$ parameter, their reported value is too close to our bound $I_{N}$ to conclude about the dimensionality of the system being maximal in a device-independent manner with satisfying statistical confidence. Here, we evaluate this quantity more precisely, with both the maximally entangled states (MES), and the states achieving the maximum violation (MVS).

\begin{figure}[h!]
\begin{center}
\includegraphics{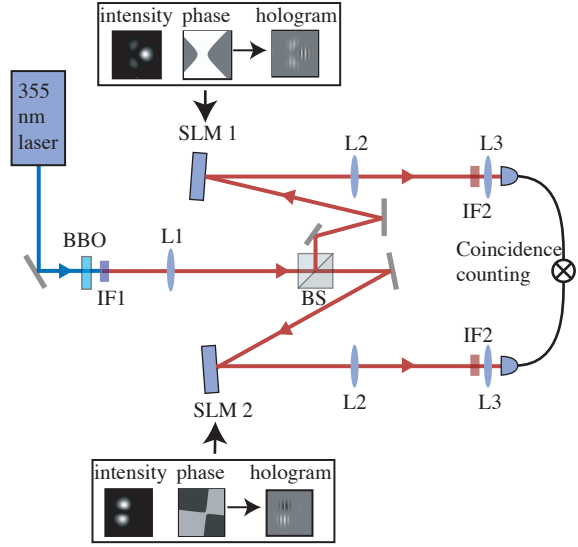}
\caption{Experiment Setup.  The holograms for measuring the states are programmed in SLMs.  The insets show sample intensity and phase cross-sections of the measurement states and the holograms we use to measure. Black to white corresponds to: 0 to 1 for the normalised intensity, 0 to $2\pi$ for the phase and 0 to 255 for the holograms.}
\label{setup}
\end{center}
\end{figure}

High-dimensional OAM entanglement is naturally present in the photons coming from spontaneous parametric down-conversion (SPDC). The generated OAM state is naturally non-maximally entangled  because of the finite crystal size and apertures in the system.  However, phase-matching conditions can be adjusted by tilting the nonlinear crystal used for SPDC, such that the degree of entanglement of OAM states in a four-dimensional subset of the generated OAM-entangled state can be tuned \cite{RomeroSB2012}. This effectively allows us to scan through the parameter $\theta$ which characterises the degree of entanglement in the prepared state
\begin{eqnarray}
\label{eqn:optimalstate}
	\ket{\psi} &=& \frac{1}{\sqrt{2}} \left( \cos \theta \ket{00} + \sin \theta \ket{11} + \sin \theta \ket{22} + \cos \theta \ket{33}
\right).
\end{eqnarray}
The photon pairs in our experiment come from a 5 mm-long $\beta$-barium borate (BBO) crystal cut for type-1 collinear SPDC.  The crystal is mounted on a fine-control rotation stage to facilitate changing the orientation of the crystal, for changing phase-matching. The crystal is pumped by a collimated 355 nm pump beam (FIG.~\ref{setup}), which is blocked by a longpass filter (IF1) after the crystal.  The signal and idler photons are separated  by a beam splitter and imaged by lenses L1 (f=200 mm) and L2 (f=400 mm) to separate spatial light modulators (SLMs) that act as programmable devices for encoding our measurement states. The SLMs are imaged onto the facets of single-mode fibres by lenses L3 (f=600 mm) and L4 (f=2.0 mm). The single-mode fibres are coupled to avalanche photodiodes (APD) for single photon detection.  To ensure we measure signal and idler photons near degeneracy, we put bandpass filters (IF2) of width 2 nm and centred at 710 nm placed in front of the fibres. The outputs of the APDs are fed to a coincidence circuit (with a timing window of 10 ns)  and the coincidence rate is recorded as a function of the states we specify in the SLM.  For all cases, the states measured (in the computational basis) are given by the optimal measurements:
\begin{eqnarray}
\label{OptimalMeasurements}
A_{x} &\equiv& \lbrace \Psi_{x}(a) \rbrace _{a=0}^{d-1}, \quad
\Psi_{x}(a) = \sum_{k=0}^{d-1} \frac{e^{i(2\pi/d)ak}}{\sqrt{d}}(e^{ik\phi_{x}} \ket{k}), \label{OMA}
\\
 B_{y} &\equiv& \lbrace \Phi_{y}(b) \rbrace _{b=0}^{d-1}, \quad
\Phi_{y}(b) = \sum_{k=0}^{d-1} \frac{e^{-i(2\pi/d)bk}}{\sqrt{d}}(e^{ik\theta_{y}} \ket{k}). \label{OMB}
\end{eqnarray}
It is necessary to modulate both phase and intensity to achieve this, and we follow the hologram design in \cite{Jack2010Pole, Davis1999Encoding}. 

We first orient the crystal such that we get a maximally entangled state (FIG.~\ref{resultsOAM}, B-MES). For this MES case, we work on the subspace spanned by OAM states $\{|-2\rangle, |-1\rangle, |1\rangle, |2\rangle\}$.  The computational basis $\{|0\rangle, |1\rangle, |2\rangle, |3\rangle\}$ corresponds to OAM states $\{|2\rangle, |1\rangle, |-1\rangle, |-2\rangle\}$ in the signal photon, and  $\{|-2\rangle, |-1\rangle, |1\rangle, |2\rangle\}$ in the idler photon.   We obtain a value of $I_4=0.333\pm0.007$ (point B in FIG. \ref{resultsOAM}), higher than the bound for qutrits. 
\begin{figure}[h!]
\begin{center}
\includegraphics{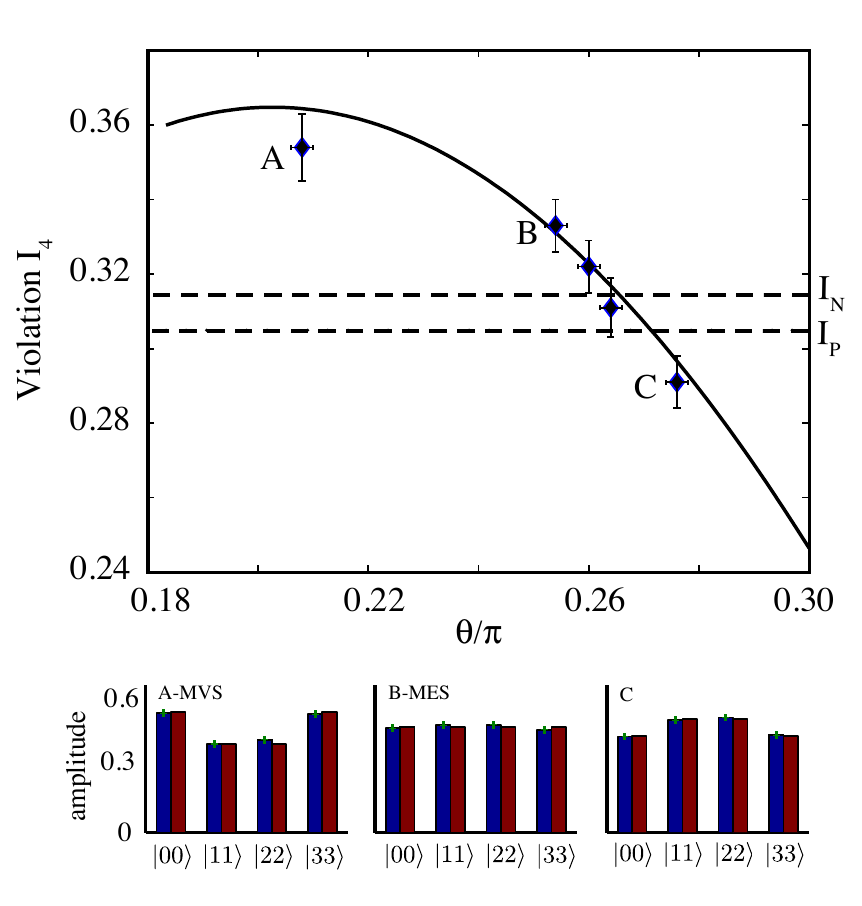}
\caption{Violation of the CGLMP inequality $I_4\leq 0$ using OAM. The solid black line shows the theoretical violations, as a function of the parameter $\theta$ as given in Eqn.\eqref{eqn:optimalstate}.  We highlight three points corresponding to maximal violation state (MVS) (A), maximally entangled state (MES) (B) and no violation of the bounds for qutrits (C) cases. The corresponding states for these points are also shown. The dashed line represent the lower bound $I_P$ and the upper bound $I_N$ on $I_4^{(3)}$.}
\label{resultsOAM}
\end{center}
\end{figure}

For the state that will violate the inequality maximally (FIG.~\ref{resultsOAM}, A-MVS), we work on the subspace spanned by $\{|-4\rangle, |-1\rangle, |1\rangle, |4\rangle\}$. The states of the computational basis, $\{|0\rangle, |1\rangle, |2\rangle, |3\rangle\}$ corresponds to OAM states $\{|4\rangle, |1\rangle, |-1\rangle, |4\rangle\}$  in the signal photon and $\{|-4\rangle, |-1\rangle, |1\rangle, |4\rangle\}$ in the idler photon.  We obtain a value of $I_4=0.354\pm0.009$ (point A in FIG. \ref{resultsOAM}), again higher than the bound for qutrits. 

The solid black line shows the violations obtained from theory.  We also show other experimental violations, which follow the theoretical curve closely. Point C in FIG.~\ref{resultsOAM} is an example where there is no violation of the bound for qutrits, $I_4=0.291\pm0.007$. We obtained this by working on a subspace spanned by OAM states  $\{|-5\rangle, |-1\rangle, |1\rangle, |5\rangle\}$. 

For device independent application of the CGLMP inequality, one would require genuine $d$-outcome measurements. In this experiment, however, only coincidences of rank-1 projectors corresponding to each measurement outcome are measured individually. The joint probability of $d$ outcome measurements, which would have been revealed had the CGLMP measurements been performed, is then reconstructed. One potential method to overcome this limitation is by the application of mode sorters~\cite{mode-sorter}, where one can sort OAM modes into different lateral positions.

\section{Conclusion}
The quantum dimension of a system can be certified in a device-independent way by implementing a dimension witness built on a Bell inequality. This demonstrates both that the system has a large dimension and its quantum nature. We showed that the CGLMP inequality with four outcomes provides such a witness that certifies $d\geq 4$: any violation larger than 0.315 cannot be attributed to smaller-dimensional systems, and is unreachable for classical systems of arbitrary dimension. We have reported such large violations with photons entangled in the orbital angular momentum degree of freedom.

By mastering higher-dimensional quantum systems, one can enhance the performance of quantum communication protocols. The dimension captures how much information can be possibly coded in each carrier, but this is only a first step. For instance, in standard (i.e. not device-independent) QKD, it is known that higher-dimensional protocols are also more robust against noise~\cite{bechmann-pasquinucci,cerf2002security, sheridan2010security,durt2004security}; and several groups have reported experiments in this direction~\cite{Walborn2006QKD,DOQKD,twisted}. For device-independent QKD with higher alphabets, much less is known. A basic study of security against no-signalling adversaries was given in \cite{scarani2006secrecy}. In~\cite{mpa11}, the authors introduce general tools to deal with security of device-independent QKD; then, among the examples, they compute Eve's information for a protocol based on the 3-outcome CGLMP inequality. The theoretical tools and the experimental capability presented in this paper will hopefully trigger significant developments in this direction.

\section*{Acknowledgements}
We acknowledge Yeong-Cherng Liang for useful discussions and Tam\'as V\'ertesi for his help in writing the see-saw algorithm used in this work. J.R. would like to thank Miles Padgett for his support for the experiment.

This work is funded by the Singapore Ministry of Education (partly through the Academic Research Fund Tier 3 MOE2012-T3-1-009) and the National Research Foundation of Singapore, Prime Minister’s Office, under the Research Centres of Excellence programme.

\newpage
\appendix

\section{CGLMP inequality and depolarization}\label{appdep}
In this section we discuss the classical processing, called depolarization, that reduces the number of variables in the probability distribution while maintaining the violation of the Bell inequality. 

In an experiment with two inputs and $n$ outcomes, the joint probability distribution of the outcomes may be recorded as $\prob{a,b|x,y}$, where $a,b \in \left\{0,1,2,\ldots,n-1\right\}$ denote the outcomes, and $x,y \in \left\{ 0,1 \right\}$ denote the choice of measurements. We require $\prob{a,b|x,y}$ to be a proper distribution, i.e. non-negative and normalized, and to respect the no-signalling condition due to the separation between the parties.



The joint probabilities can be organised in an $2n$-by-$2n$ array:
\begin{align}
P=
{
\left(
\begin{tabular}{c|c}
\prob{a,b|0,0}& \prob{a,b|0,1} \\
\hline
\prob{a,b|1,0}&\prob{a,b|1,1} \\
\end{tabular}\right).
}
\end{align}

The inequality we are interested in here is the so-called CGLMP inequalities~\cite{CGLMP}. In a form as Eqn.(41) of Ref.~\cite{CG}, it can be expressed as

\begin{align}
I_n = \langle \mathcal{I}_n, P \rangle -2 \leq 0,
\label{eq:CGLMPn}
\end{align}
with
{
\renewcommand{\arraystretch}{1.2}
\begin{align}
\label{eqn:CGLMPn}
\mathcal{I}_n = 
\left(
\begin{tabular}{c|c}
\textbf{J}$_n$  & \textbf{J}$^T_n$ \\
\hline
\textbf{J}$^T_n$ & -\textbf{J}$^T_n$ 
\end{tabular}
\right),
\end{align}
}
where \textbf{J} is a n-by-n array with only an upper triangular array filled with 1, $^T$ is the transposition, and $\langle \cdot , \cdot \rangle$ denotes the sum of term-by-term multiplication. With local hidden variables, $I_n \leq 0$, while the generalized PR-box violates up to $I_4^\text{PR} = \frac{n-1}{n}$~\cite{barrett2004nonlocal}.

%
Due to the symmetry present in the inequality, the number of parameters in $P$ that are relevant for the value of $I_n$ can be reduced. Indeed, there exists a classical post processing, also called a depolarization, that maps all the points in the probability distribution space to a lower dimension space, while keeping the CGLMP violation unchanged \cite{scarani2006secrecy}. Under the action of this map, every probability distribution is projected to a slice of the no-signalling polytope. This procedure can be described as follows:

\textbf{Step 1.} Alice and Bob add a number $k$, uniformly chosen from $\{0,\cdots , n-1\}$, to their outcomes:
\begin{eqnarray}
\nonumber a &\rightarrow a + k, \;b &\rightarrow b + k.
\end{eqnarray}

\textbf{Step 2.} With probability $\frac{1}{4}$ according shared randomness, Alice and Bob perform one of the four possible processes:
\begin{eqnarray}
\nonumber	\textbf{Proc 1.} &A: \text{Do nothing,} 
& B: \text{Do nothing;} \\
\nonumber	\textbf{Proc 2.} &A: x \rightarrow \bar{x},a\rightarrow-a, 
& B: b \rightarrow -b+y; \\
\nonumber	\textbf{Proc 3.} &A: a \rightarrow -a-x,
& B: y\rightarrow\bar{y},b\rightarrow-b; \\
\nonumber	\textbf{Proc 4.} &A: x\rightarrow\bar{x},a\rightarrow a+x,
& B:  y\rightarrow\bar{y},b\rightarrow b+ \bar{y};  
\end{eqnarray}
where $\bar{x} = 1-x$ and the operation on the outcome are done modulo $n$. These implements $\text{P} \rightarrow \text{P'}$ such that it only depends on the difference of the outcome $\Delta = a-b$ as follows:
%
\begin{align}
\label{eqn:slice}
	P'(\Delta|0,0)=P'(-\Delta|0,1)=P'(-\Delta|1,0)=P'(\Delta+1|1,1).
\end{align}
The number of free variables is thus reduced to $n-1$.

In the case of $n=4$ outcomes, the depolarised probability takes the following form:
\begin{eqnarray}
P\xrightarrow{\text{depolarisation}} P' = \frac{1}{4}
\left(\begin{array}{cccc|cccc}
		
										p_0	&p_3  &p_2	&p_1  &p_0	&p_1	&p_2	&p_3\\
										p_1	&p_0  &p_3  &p_2	&p_3	&p_0	&p_1	&p_2\\
										p_2	&p_1  &p_0  &p_3	&p_2	&p_3	&p_0	&p_1\\
										p_3	&p_2	&p_1  &p_0	&p_1	&p_2	&p_3	&p_0\\
		\hline
										p_0	&p_1	&p_2	&p_3	&p_1	&p_0	&p_3	&p_2\\
										p_3	&p_0	&p_1	&p_2	&p_2	&p_1	&p_0	&p_3\\
										p_2	&p_3	&p_0	&p_1	&p_3	&p_2	&p_1	&p_0\\
										p_1	&p_2	&p_3	&p_0	&p_0	&p_3	&p_2	&p_1\\
\end{array}\right),
\end{eqnarray}
with $\sum p_i = 1$. 
Since these probabilities live in a three-dimensional space, they can be easily represented geometrically (see FIG.~\ref{fig:depolarizedSpace}).

\begin{figure}[tb]
\begin{center}
\includegraphics[width=0.6\textwidth]{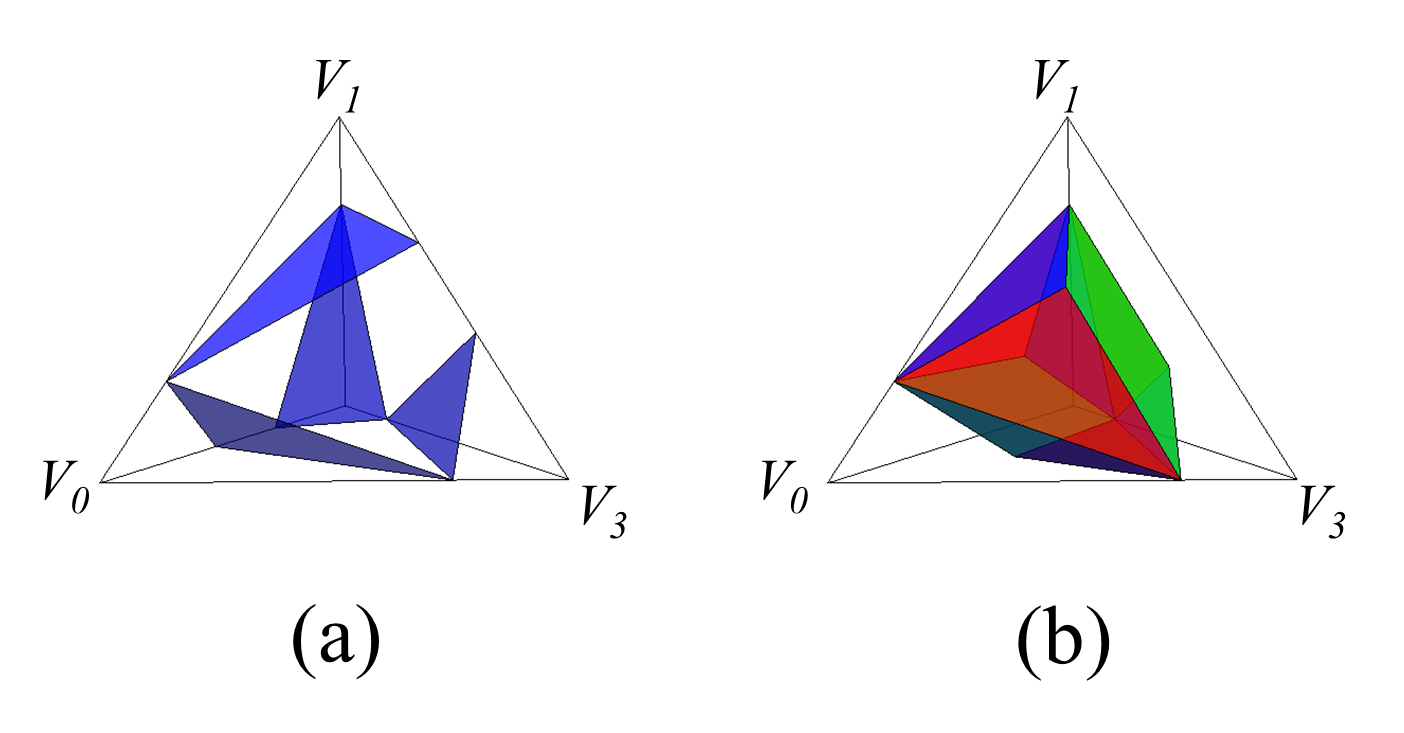}
\caption{(Colour online) Geometrical representation of the 2-settings 4-outcomes depolarized probability space. The no-signalling polytope is the pyramid with extremal points $V_i,i\in\{0,1,2,3\}$, where $V_i$ is a generalized PR box $p_i=1$. (a) Different relabelling of the CGLMP inequality impose four constraints on local correlations in this space. (b) The local polytope is fully determined here by three kinds of facets (blue, green, red). The blue facets represents the CGLMP inequality considered in the main text.}
\label{fig:depolarizedSpace}
\end{center}
\end{figure}

\section{Violation of CGLMP$_4$ inequality with four level systems}
\label{sec:ququart}
In this section, we discuss the violation of CGLMP$_4$ inequality with four-level systems, with focus on maximizing this violation. Incidentally, the maximal violation with four-level systems is also the maximal quantum violation of this inequality.

Zohren and Gill~\cite{Zohren2008Maximal} showed that for $n \geq 3$ and when the dimension of the system is same as the number of outcomes, the state that allows for a maximal violation of the CGLMP inequality is not the maximally entangled state, but a partially entangled one. This was already conjectured for small $n$ in~\cite{acin2002quantum}, and recently confirmed in the case $n=3$ through self-testing~\cite{yang2014robust}. The optimal measurement basis are conjectured to be the same for both the maximally entangled state and the maximal violation state, namely:

\begin{eqnarray}
A_{x} &\equiv& \lbrace \Psi_{x}(a) \rbrace _{a=0}^{d-1}, \quad
\Psi_{x}(a) = \sum_{k=0}^{d-1} \frac{e^{i(2\pi/d)ak}}{\sqrt{d}}(e^{ik\phi_{x}} \ket{k}), \label{OMAappen}
\\
 B_{y} &\equiv& \lbrace \Phi_{y}(b) \rbrace _{b=0}^{d-1}, \quad
\Phi_{y}(b) = \sum_{k=0}^{d-1} \frac{e^{-i(2\pi/d)bk}}{\sqrt{d}}(e^{ik\theta_{y}} \ket{k}), \label{OMBappen}
\end{eqnarray}
where $\phi_0=0, \phi_1=\frac{\pi}{d}, \theta_0 = -\frac{\pi}{2d}$ and $\theta_1 = \frac{\pi}{2d}$.

Let us restrict ourselves to the case $n=4$, to find the maximal violation of the $I_4$ inequality. For this we note that after fixing the measurement settings, finding the maximal violation amounts to find the maximum eigenvalue and corresponding eigenvector of the Bell operator. Another approach is based on the observed symmetries in the states that optimizes the violation. In the Schmidt form, numerical optimization indicates that the maximal violation states are of the following form:
\begin{eqnarray}
\label{eqn:optimalstateAppen}
	\ket{\psi} &=& \frac{1}{\sqrt{2}} \left( \cos \theta \ket{00} + \sin \theta \ket{11} + \sin \theta \ket{22} + \cos \theta \ket{33} \right).
\end{eqnarray}
For state \eqref{eqn:optimalstateAppen} and measurements (\ref{OMAappen}-\ref{OMBappen}) the CGLMP value $I$ becomes a function of the single parameter $\theta$:
\begin{eqnarray}
\label{eqn:curve}
I(\theta) = 
\left( -\frac{3}{4}+ \frac{C}{2}\right) + \left( \frac{1}{2\sqrt{2}} + \frac{S}{\sqrt{2}}\right) \sin{2\theta} + \frac{S}{2} \cos{2\theta},
\end{eqnarray}
where we introduce constants $C=\cos{\frac{\pi}{8}}$ and $S=\sin{\frac{\pi}{8}}$, $C-S = \sqrt{2}S$ and $C+S = \sqrt{2}C$ are used to simplify the expression. Eqn.~\eqref{eqn:curve} is plotted in FIG.~\ref{resultsOAM} to compare with experimental data points.

By setting $\frac{dI}{d\theta} = 0$, the maximal violation state is achieved with $\theta$ such that 
\begin{eqnarray}
\tan{2\theta} = \frac{1+2S}{\sqrt{2}S}, \; \cos{2\theta} = \frac{\sqrt{2}S}{\sqrt{6S^2+4S+1}}, \; \sin{2\theta} = \frac{1+2S}{\sqrt{6S^2+4S+1}}
\end{eqnarray}
and the maximal violation is:
\begin{eqnarray}
%
I^{max} =-\frac{3}{4}+\frac{C}{2}+\frac{1}{2\sqrt{2}}\sqrt{6S^2+4S+1}\approx & 0.364762.
\label{eq:maxCGLMP}
\end{eqnarray}

In order to check whether this is the maximal violation for any quantum states, we compare the violation with upper bounds obtained with the semidefinite programme (SDP) hierarchy of quantum correlations~\cite{navascues2008convergent}: the maximal violation $I^{max}$ agrees with the Table 1 of \cite{navascues2008convergent} up to $10^{-6}$.

\section{Violation of CGLMP$_4$ inequality with three level systems}
\label{sec:qutrit}
If we restrict ourself to only entangled qutrits (three level quantum systems), the violation of the inequality \eqref{eq:CGLMPn} might possibly be lower. An upper bound on $I^{(3)}_4$ is derived in the main text via negativity with semidefinite programme. Here a lower bound on $I_4^{(3)}$, $I_p$, is derived by finding a two-qutrit state and measurements which achieve some violation of~\eqref{eq:CGLMPn}. The bound we obtain here is based on a restricted class of POVM, and supported by numeric evidence.

Namely, we consider qutrit measurements that never produce the last outcome. These are measurement with only three possible outcomes. Hence we assign a probability 0 to the last outcome of all measurements:
\begin{eqnarray}
\prob{a,3|x,y} = \prob{3,b|x,y} &=&0
\end{eqnarray}

Due to the null probability of the fourth outcome, the last number in each row and column of each square of $\mathcal{I}_4$ is irrelevant. In particular, the value of $I_4$ is the same as that obtained by exchanging the table of coefficients $\mathcal{I}_4$ by 
{
\renewcommand{\arraystretch}{1.2}
\begin{align}
\mathcal{I}_4' = 
\left(
\begin{tabular}{cc|cc}
\textbf{J}$_3$  & 0 & \textbf{J}$^T_3$ & 0\\
0& 0& 0& 0\\
\hline
\textbf{J}$^T_3$ & 0 & -\textbf{J}$^T_3$ & 0\\
0& 0& 0& 0
\end{tabular}
\right).
\end{align}
}

Since this table is identical that of the CGLMP inequality with three outcomes, maximizing the $I_4$ inequality with this chosen simple POVMs is thus equivalent to testing the CGLMP$_3$ inequality, $I_3$. The maximum violation of $I_3$ thus constitutes an upper bound on the maximum violation of CGLMP$_4$ achievable with qutrits and this choice of simple POVMs. Namely, this bound is known to be $I_3^*$ to be $\frac{\sqrt{33}-3}{9}\approx 0.30495$.

Note that in the case where the projective measurement is chosen as to forbid a different outcome than the last one, maximizing $I_4$ can amount to maximizing one of 8 possible 3-outcome Bell expressions. The NPA hierarchy allows one to bound the maximal quantum violation of each of these inequalities to either $I_3^*$, $I_2^*$ or 0, of which $I_3^*$ is the largest.

In order to check whether the bound $I_4^{(3)}\leq I_P=I_3^*$ remains valid for general POVM's we turned to numerical optimization method. An iterative numerical optimization procedure, called the see-saw method was introduced in \cite{werner2001} and further developed in \cite{liang2009reexamination}. It can maximize the violation of a Bell inequality with a constraint on the dimension of the measured system, but with no constraint on the measurement used. Although it is not guaranteed to converge, it has given remarkable results in similar contexts \cite{pal2010maximal}. We did not find any violation larger than $I_3^*$ with this method. We thus conjecture that $I_3^*$ is indeed the maximum violation of the inequality $I_4$ achievable with three level quantum systems, even with general POVMs.  

\newpage

\Bibliography{99}
	
	\bibitem{DW1gallego}
	R. Gallego, N. Brunner, C. Hadley, and A. Ac{\'i}n, \textit{Phys. Rev. Lett.} \textbf{105}, 230501 (2010).
	
	\bibitem{DW1brunner} N. Brunner, M. Navascu\'es, T. V\'ertesi, \textit{Phys. Rev. Lett.} \textbf{110}, 150501 (2013). 
	
	\bibitem{DW1bowles} J. Bowles, M.T. Quintino, N. Brunner, \textit{Phys. Rev. Lett.} \textbf{112}, 140407 (2014).

	\bibitem{DW1exp1} M. Hendrych, R. Gallego, M. Mi\v{c}uda, N. Brunner, A. Ac\'{\i}n, J.P. Torres, \textit{Nat. Phys.} \textbf{8}, 588 (2012).
	
	\bibitem{DW1exp2} J. Ahrens, P. Badzig, A. Cabello, and M. Bourennane, \textit{Nat. Phys.} \textbf{8}, 592 (2012).
	
	\bibitem{DW1exp2b} J. Ahrens, P. Badziag, M. Pawlowski, M. Zukowski, M. Bourennane, \textit{Phys. Rev. Lett.} \textbf{112}, 140401 (2014).
	
	\bibitem{DAmbrosio2014} V. D'Ambrosio, F. Bisesto, F. Sciarrino, J. F. Barra, G. Lima, A. Cabello, \textit{Phys. Rev. Lett.} \textbf{112}, 140503 (2014).
	
	\bibitem{Tavakoli} A. Tavakoli, A. Hameedi, B. Marques, M. Bourennane, \textit{Phys. Rev. Lett.} \textbf{114}, 170502 (2015).
	
	\bibitem{DW2original}
	N. Brunner, S. Pironio, A. Ac{\'i}n, N. Gisin, A. M{\'e}thot, and V. Scarani, \textit{Phys. Rev. Lett.} \textbf{100}, 210503 (2008).
	
	\bibitem{DW2vertesipal} T. V\'ertesi, K.F. P\'al, \textit{Phys. Rev. A} \textbf{79}, 042106 (2009).
	
	\bibitem{Moroder} T. Moroder, J.-D. Bancal, Y.-C. Liang, M. Hofmann, and O. G{\"u}hne, \textit{Phys. Rev. Lett.} \textbf{111(3)}, 030501 (2013).
	
	\bibitem{Navascues2014Characterization} M. Navascu\'es, G. de la Torre, and T. V\'ertesi, \textit{Phys. Rev. X} \textbf{4}, 011011 (2014).
	
	\bibitem{Navascues2015Bounding} M. Navascu{\'e}s, T. V{\'e}rtesi, \textit{Phys. Rev. Lett.} \textbf{115}, 020501 (2015).
	
	\bibitem{wehner} S. Wehner, M. Christandl, A. Doherty, \textit{Phys. Rev. A} \textbf{78}, 062112 (2008).
	
	\bibitem{SikoraPSD} J. Sikora, A. Varvitsiotis, Z. Wei, arxiv:1507.00213 [quant-Ph] (2015).
	
	\bibitem{CGLMP} D. Collins, N. Gisin, N. Linden, S. Massar, and S. Popescu, \textit{Phys. Rev. Lett.} \textbf{88}, 040404 (2002).

	\bibitem{vaziri2002}
	A. Vaziri, G. Weihs, and A. Zeilinger, \textit{Phys. Rev. Lett.} \textbf{89}, 240401 (2002).

	
	

	\bibitem{Dada2011} A. Dada, J. Leach, G. Buller, M. J. Padgett, and E. Andersson, \textit{Nature Phys.} \textbf{7} (2011).
	
	\bibitem{Howell2012} P.B. Dixon, G.A. Howland, J. Schneeloch, J.C. Howell, \textit{Phys. Rev. Lett.} \textbf{108}, (2012).
	
	\bibitem{Zeilinger1} R. Fickler, R. Lapkiewicz, W. N. Plick, M. Krenn, C. Schaeff, S. Ramelow, A. Zeilinger, \textit{Science} \textbf{338}, 640 (2012).
	
	\bibitem{Zeilinger2} M. Krenn, M. Huber, R. Fickler, R. Lapkiewicz, S. Ramelow, A. Zeilinger, PNAS \textbf{111}, 6243 (2014).

	\bibitem{CG}
	D. Collins, N. Gisin, \textit{J. Phys. A: Math. Gen} \textbf{37(5)}, 1775 (2004).
	
	\bibitem{liang2009reexamination}
	Y-C. Liang, C-W. Lim, D-L. Deng, \textit{Phys. Rev. A} \textbf{80}, 052116 (2009).
	
	\bibitem{pal2010maximal}
	K. F. P\'al, T. V\'ertesi, \textit{Phys. Rev. A} \textbf{82}, 022116 (2010).
	
	\bibitem{zyczkowski1998}
	K. {\.Z}yczkowski, P. Horodecki, A. Sanpera, M. Lewenstein, \textit{Phys. Rev. A} \textbf{58(2)}, 883-892 (1998).

	\bibitem{vidal2002}
	G. Vidal and R. F. Werner, \textit{Phys. Rev. A} \textbf{65}, 032314 (2002).

	\bibitem{navascues2008convergent}
	M. Navascu{\'e}s, S. Pironio, and A. Ac{\'i}n, \textit{New J. Phys.} \textbf{10}, 073013 (2008).
	
	\bibitem{Allen1992}
	L. Allen, M. Beijersbergen, R. Spreeuw, J.P. Woerdman, \textit{Phys. Rev. A} \textbf{45}, 8185 (1992).
	
	\bibitem{RomeroSB2012}
	J. Romero, D. Giovannini, S. Franke-Arnold, S. M. Barnett, M. J. Padgett.  \textit{Phys. Rev. A} \textbf{86}, 012334 (2012).
	
	\bibitem{Jack2010Pole}
	B. Jack, A. Yao, J. Leach, J. Romero, S. Franke-Arnold, D. Ireland, S. M. Barnett, and M. J. Padgett, \textit{Phys. Rev. A} \textbf{81},
	43844 (2010).
	
	\bibitem{Davis1999Encoding}
	J. A. Davis, D. M. Cottrell, J. Campos, M. J. Yzuel, and I. Moreno, \textit{Appl. Opt.} \textbf{38}, 5004 (1999).
	
	\bibitem{mode-sorter}
	G. Berkhout, M. Lavery, J. Courtial, M. Beijersbergen, and M. Padgett, \textit{Phys. Rev. Lett.} \textbf{105}, 153601 (2010).

	\bibitem{bechmann-pasquinucci}
	H. Bechmann-Pasquinucci, and W. Tittel, \textit{Phys. Rev. A} \textbf{61}, 062308 (2000).
	
	\bibitem{cerf2002security}
	N. Cerf, M. Bourennane, A. Karlsson, and N. Gisin, \textit{Phys. Rev. Lett.} \textbf{88}, 127902 (2002).
	
	\bibitem{durt2004security}
	T. Durt, D. Kaszlikowski, J.-L. Chen, and L. Kwek, \textit{Phys. Rev. A} \textbf{69}, 032313 (2004).
	
	\bibitem{sheridan2010security}
	L. Sheridan, V. Scarani, \textit{Phys. Rev. A} \textbf{82}, 030301R (2010).
	
	\bibitem{Walborn2006QKD}
	S.P. Walborn, D.S. Lemelle, M.P. Almeida, P.H. Souto Ribeiro, \textit{Phys. Rev. Lett.} \textbf{96}, 090501 (2006).
	
	\bibitem{twisted}
	M. Mirhosseini, O. S. Magaña-Loaiza, M. N. O’Sullivan, B. Rodenburg, M. Malik, M. P. J. Lavery, and R. W. Boyd, \textit{New J. Phys.}, \textbf{17(3)}, 033033 (2015).
	
	\bibitem{DOQKD}
	J. Mower, Z. Zhang, P. Desjardins, C. Lee, J. H. Shapiro, and D. Englund, \textit{Phys. Rev. A}, \textbf{87}, 062322 (2013).
	
	\bibitem{scarani2006secrecy}
	V. Scarani, N. Gisin, N. Brunner, L. Masanes, S. Pino, and A. Ac{\'i}n, \textit{Phys. Rev. A} \textbf{74}, 042339 (2006).
	
	\bibitem{mpa11} L. Masanes, S. Pironio, A. Ac\'{\i}n, \textit{Nat. Comm.} \textbf{2}, 238 (2011).
	
	
	\bibitem{barrett2004nonlocal}
	J. Barrett, N. Linden, S. Massar, S. Pironio, S. Popescu and D. Roberts, \textit{Phys. Rev. A} \textbf{71}, 022101 (2005).
	
	\bibitem{Zohren2008Maximal}
	S. Zohren and R. D. Gill, \textit{Phys. Rev. Lett.} \textbf{100}, 120406 (2008).
	
	\bibitem{acin2002quantum}
	A. Ac\'in, T. Durt, N. Gisin, and J. I. Latorre, \textit{Phys. Rev. A} \textbf{65}, 052325 (2002).
		
	\bibitem{yang2014robust}
	T. H. Yang, T. V\'ertesi, J-D. Bancal, V. Scarani, M. Navascu\'es, \textit{Phys. Rev. Lett.} \textbf{113}, 040401 (2014).

	\bibitem{werner2001}
	R. F. Werner and M. M. Wolf, \textit{Quant. Inf. Comp.} \textbf{1}, 1 (2001).
	

	
	
	
	

\end{thebibliography}

\end{document}